\documentclass[12pt]{article}
\usepackage{amssymb}
\usepackage{amsmath}
\usepackage{amsthm}
\usepackage{color}
\usepackage[unicode,bookmarks,bookmarksopen,%
bookmarksopenlevel=2,colorlinks,linkcolor=blue,citecolor=green]{hyperref}

\setcounter{MaxMatrixCols}{10}

\newtheorem{theorem}{Theorem}

\newtheorem{example}{Example}

\newcommand*{\fd}
[2]{\mathchoice{\frac{\delta#1}{\delta#2}}
  {\delta #1/\delta#2}{\delta#1/\delta#2}{\delta#1/\delta#2}}
\newcommand{\ddx}[1]{D^{#1}}
\textheight23cm \textwidth16cm \hoffset-1.4cm \topmargin-1.5cm
\begin{document}

\title{\textbf{Projective-geometric aspects of homogeneous third-order
    Hamiltonian operators}} \author{E.V. Ferapontov$^1$,
  M.V. Pavlov$^{2, 3}$,
  R.F. Vitolo$^{4}$ \\[3mm]
  $^{1}$Department of Mathematical Sciences,\\
  Loughborough University,\\
  Loughborough, Leicestershire, LE11 3TU, UK\\
  \texttt{e.v.ferapontov@lboro.ac.uk}\\[3mm]
  $^{2}$Department of Mathematical Physics,\\
  Lebedev Physical Institute of Russian Academy of Sciences,\\
  Leninskij Prospekt, 53, Moscow, Russia\\
  \texttt{m.v.pavlov@lboro.ac.uk}\\
  [3mm] $^{3}$Laboratory of Geometrical Methods in Mathematical Physics,\\
Lomonosov Moscow State University,\\
Leninskie Gory 1, 119991 Moscow, Russia\\
 [3mm]
  $^{4}$Department of Mathematics and Physics ``E. De Giorgi'',\\
  University of Salento, Lecce, Italy\\
  \texttt{raffaele.vitolo@unisalento.it} } \date{}
\maketitle

\begin{abstract}
  We investigate homogeneous third-order Hamiltonian operators of
  differential-geometric type. Based on the correspondence with quadratic line
  complexes, a complete list of such operators with
  $n\leq 3$ components is obtained.

  \bigskip

  \noindent MSC: 37K05, 37K10, 37K20, 37K25.

  \bigskip

  \noindent Keywords: Hamiltonian Operator, Jacobi Identity, Projective Group,
  Quadratic Complex, Monge Metric, Reciprocal Transformation, Darboux Theorem.
\end{abstract}

\newpage

\textit{Dedicated to the  memory of Professor Yavuz Nutku (1943-2010)}

\tableofcontents

\section{Introduction}

First-order homogeneous Hamiltonian operators were introduced in \cite{DN} in
the study of one-dimensional systems of hydrodynamic type\footnote{they are
  also known as differential-geometric, or Dubrovin--Novikov brackets}. It was
demonstrated that these operators are parametrized by flat pseudo-Riemannian
metrics.  Higher-order operators were subsequently defined in \cite{DN2}. The
structure of homogeneous second-order Hamiltonian operators was investigated in
\cite{GP87, Doyle}, see also \cite{ferg}.

In this paper we address the problem of classification of homogeneous
third-order Hamiltonian operators of differential-geometric type \cite{GP91,
  Doyle, GP97, OM98, BP},
\begin{multline}
  P=g^{ij}\ddx{3}+b_{k}^{ij}u_{x}^{k}\ddx{2}
  +(c_{k}^{ij}u_{xx}^{k}+c_{km}^{ij}u_{x}^{k}u_{x}^{m})\ddx{}
  \\
  +d_{k}^{ij}u_{xxx}^{k}+d_{km}^{ij}u_{xx}^{k}u_{x}^{m}
  +d_{kmn}^{ij}u_{x}^{k}u_{x}^{m}u_{x}^{n}.
  \label{third}
\end{multline}
Here $u^i$, $i=1$, \dots, $n$, are the dependent (field) variables, and the
coefficients $g^{ij}, \dots, d_{kmn}^{ij}$ depend on $u^{i}$ only; $D$ stands
for the total derivative with respect to $x$. Homogeneity is understood as
follows: the independent variable $x$ has order $-1$, the dependent variables
$u^i$ have order $0$, so that the order of $u^i_x$ and $\ddx{}$ is $1$, etc.
The operator $P$ is Hamiltonian if and only if it is formally skew-adjoint,
$P^*=-P$, and its Schouten bracket vanishes, $[P, P]=0$.  Equivalently, the
corresponding Poisson bracket,
\begin{displaymath}
  \{F_1,F_2\} = \int \fd{F_1}{u^i}P^{ij}\fd{F_2}{u^j} dx,
\end{displaymath}
must be skew-symmetric, and satisfy the Jacobi identity.
We restrict our considerations to the non-degenerate case, $\det g^{ij}\neq
0$. Operators~\eqref{third} are form-invariant under point transformations of
the dependent variables, $u=u(\tilde u)$. Under point transformations, the
coefficients of (\ref{third}) transform as differential-geometric objects.  For
instance, $g^{ij}$ transforms as a $(2,0)$-tensor, so that its inverse $g_{ij}$
defines a pseudo-Riemannian metric (which is not flat in general), the
expressions $ - \frac{1}{3}g_{js}b_{k}^{si}$, $ - \frac{1}{3}g_{js}c_{k}^{si}$,
$ - g_{js}d_{k}^{si}$ transform as Christoffel symbols of affine connections,
etc.
It was conjectured in \cite{Novikov} that the last connection, $\Gamma
_{jk}^{i}= - g_{js}d_{k}^{si}$, must be symmetric and flat; this was confirmed
in \cite{GP91}, see also \cite{Doyle}.  Therefore, there exists a coordinate
system (flat coordinates) such that $\Gamma_{jk}^{i}$ vanish. These coordinates
are determined up to affine transformations. We will keep for them the same
notation $u^i$, note that $u^i$ are nothing but the densities of Casimirs of
the corresponding Hamiltonian operator (\ref{third}). In the flat coordinates
the last three terms in (\ref{third}) vanish, leading to the simplified
expression
\begin{equation}
  P=\ddx{}\left(g^{ij}\ddx{}+c_{k}^{ij}u_{x}^{k}\right)\ddx{}.  \label{casimir}
\end{equation}
This operator is Hamiltonian if and only if the coefficients $g^{ij}$ and
$c_{k}^{ij}$ satisfy the following relations:
\begin{subequations}
  \begin{gather}
    g_{,k}^{ij}=c_{k}^{ij}+c_{k}^{ji},\label{eq:1}\\
    c_{s}^{ij}g^{sk}=-c_{s}^{kj}g^{si},  \label{a}\\
    c_{s}^{ij}g^{sk}+c_{s}^{jk}g^{si}+c_{s}^{ki}g^{sj}=0,\label{eq:2}\\
    c_{s,m}^{ij}g^{sk}=c_{s}^{ik}c_{m}^{sj}-c_{s}^{ki}c_{m}^{sj}-c_{s}^{kj}g_{,m}^{si}.
    \label{b}
  \end{gather}
\end{subequations}
Here \eqref{eq:1} is equivalent to $P^*=-P$, while \eqref{a}-\eqref{b} are
equivalent to $[P, P]=0$.  These conditions are invariant under affine
transformations of the flat coordinates.
It is useful to rewrite the above system in low indices. Introducing
$c_{ijk}=g_{iq}g_{jp}c_{k}^{pq}$ one obtains \cite{GP97}:
\begin{subequations}\label{eq:3}
  \begin{gather}
    g_{mn,k}=-c_{mnk}-c_{nmk},\label{eq:4}\\
    c_{mnk}=-c_{mkn},\label{eq:5}
    \\
    c_{mnk}+c_{nkm}+c_{kmn}=0,\label{eq:6}\\
    c_{mnk,l}=-g^{pq}c_{pml}c_{qnk}.  \label{eq:7}
  \end{gather}
\end{subequations}
Our main observation is that the metric $g$ satisfying equations \eqref{eq:3}
must be the Monge metric of a quadratic line complex.  Since complexes of lines
belong to projective geometry, equations \eqref{eq:3} should
be invariant under the full projective (rather than affine) group. We
demonstrate that this is indeed the case. Based on the projective
classification of quadratic line complexes in $\mathbb{P}^3$ into eleven Segre
types \cite{Jess}, we give a complete list of three-component Hamiltonian
operators.


The structure of the paper is as follows. After discussing known examples of
third-order homogeneous Hamiltonian operators in Section~\ref{sec:examples}, we
sumarize our main results in Section~\ref{sec:summary-main-results}. In
Section~\ref{sec:geom-interpr-poiss} we establish a link between homogeneous
third-order Hamiltonian operators and Monge metrics/quadratic
line complexes. This indicates that the theory is essentially
projectively-invariant (Section \ref{sec:proj}), and leads to the
classification results presented in Section~\ref{sec:class-results}.

All computations were performed with the software package CDIFF \cite{cdiff} of
the REDUCE computer algebra system \cite{reduce}.

\section{Examples}\label{sec:examples}

To the best of our knowledge, all interesting examples of integrable systems
possessing Hamiltonian structures of the form (\ref{third}) come from the
theory of Witten--Dijkgraaf--Verlinde--Verlinde (WDVV) equations of 2D
topological field theory. These are integrable PDEs of Monge--Amp\`ere type
which acquire a Hamiltonian formulation upon transformation into hydrodynamic
form \cite{FM}.


\begin{example} \cite{OM98} The hyperbolic Monge--Amp\`ere equation,
  $u_{tt}u_{xx}-u_{xt}^2= - 1$, can be reduced to hydrodynamic form
  \begin{equation*}
    a_{t}=b_{x},\text{ \ }b_{t}=\left( \frac{b^{2}-1}{a}\right) _{x},
  \end{equation*}
  via the change of variables $a=u_{xx}$, $b=u_{xt}$. It possesses the
  Hamiltonian formulation
  \begin{equation*}
    \begin{pmatrix}
      a \\ b
    \end{pmatrix}_t = P\begin{pmatrix} \fd{H}{a}\\ \fd{H}{b}
    \end{pmatrix},
  \end{equation*}
  with the homogeneous third-order Hamiltonian operator
  \begin{equation*}
    P=  \ddx{}\left(
      \begin{array}{cc}
	0 & \displaystyle\ddx{}\frac{1}{a} \\
	\displaystyle\frac{1}{a}\ddx{} & \displaystyle
	\frac{b}{a^{2}}\ddx{}+\ddx{}\frac{b}{a^{2}}
      \end{array}%
    \right) \ddx{},
  \end{equation*}
  and the nonlocal Hamiltonian,
  \begin{equation*}
    H= - \int \left( \frac{1}{2}a(\ddx{-1}b)^2+\ddx{-2}a\right) dx.
  \end{equation*}
  Note that $\fd{H}{a}=-\frac{1}{2}(\ddx{-1}b)^2-\frac{x^2 }{2}, \
  \fd{H}{b}=\ddx{-1}(a\ddx{-1}b).$
\end{example}

\begin{example}\label{sec:examples-1}
  \cite{FN} The simplest nontrivial case of the WDVV equations is the
  third-order Monge--Amp\`ere equation, $f_{ttt} = f_{xxt}^2 - f_{xxx}f_{xtt}$
  \cite{Dub}. This PDE can be transformed into hydrodynamic form,
  \begin{equation*}
    a_t=b_x,\quad b_t=c_x,\quad c_t=(b^2-ac)_x,
  \end{equation*}
  via the change of variables $a=f_{xxx}$, $b=f_{xxt}$, $c=f_{xtt}$. This
  system possesses the Hamiltonian formulation
  \begin{equation*}
    \begin{pmatrix}
      a \\ b \\ c
    \end{pmatrix}_t =P
    \begin{pmatrix}
      \fd{H}{a}\\ \fd{H}{b} \\ \fd{H}{c}
    \end{pmatrix},
  \end{equation*}
  with the homogeneous third-order Hamiltonian operator
  $$
  P= \ddx{}\left(
    \begin{array}{ccc}
      0 & 0 & \displaystyle\ddx{} \\
      0 & \displaystyle\ddx{} & -\displaystyle\ddx{}a \\
      \displaystyle\ddx{} & -a\displaystyle\ddx{} &
      \displaystyle\ddx{}b + b\ddx{} + a \ddx{} a
    \end{array}\right)
  \ddx{},
    $$
    and the nonlocal Hamiltonian,
    \begin{displaymath}
      H=-\int\left(  \frac{1}{2}a\left({\ddx{}}^{-1}b\right)^2 + {\ddx{}}^{-1}b
	{\ddx{}}^{-1}c\right)dx.
    \end{displaymath}
  \end{example}

\begin{example}
  Further examples are provided in \cite{KN1,KN2}.  One of them is the equation
  $f_{xxx} = f_{ttx}^2 - f_{ttt}f_{txx}$ which is obtained from the WDVV
  equation of Example 2 by simply interchanging $t$ and $x$. Remarkably, the
  corresponding Hamiltonian formulation is rather different. The change of
  variables $a=f_{xxx}$, $b=f_{xxt}$, $c=f_{xtt}$ brings the equation into
  hydrodynamic form,
  \begin{equation*}
    a_t=b_x,\quad b_t=c_x,\quad c_t=\left(\frac{c^2-a}{b}\right)_x.
  \end{equation*}
  This system possesses the Hamiltonian formulation
  \begin{equation*}
    \begin{pmatrix}
      a \\ b \\ c
    \end{pmatrix}_t =P
    \begin{pmatrix}
      \fd{H}{a}\\ \fd{H}{b} \\ \fd{H}{c}
    \end{pmatrix},
  \end{equation*}
  with the homogeneous third-order Hamiltonian operator
$$
P= \ddx{}\left(
  \begin{array}{ccc}
    -\displaystyle\ddx{} & 0 & 0 \\[3mm]
    0 & 0 & \displaystyle\ddx{}\frac{1}{b} \\[3mm]
    0 & \displaystyle \frac{1}{b}\ddx{} &
    \displaystyle \frac{c}{b^2}\ddx{} +
    \ddx{}\frac{c}{b^2}
  \end{array}\right)
\ddx{},
 $$
 and the nonlocal Hamiltonian,
 \begin{displaymath}
   H=\int\left( c {\ddx{}}^{-1} b {\ddx{}}^{-1} c +
     {\ddx{}}^{-1}a{\ddx{}}^{-1}b\right)dx.
 \end{displaymath}
 Note that this operator is a direct sum of the one-component operator
 $-\ddx{3}$ (in the variable $a$), and the two-component operator from Example
 1 (in variables $b, c$). On the contrary, the operator from Example 2 is not
 reducible.
\end{example}


Two more WDVV-type equations were considered in \cite{KN1,KN2}, namely
$f_{ttt}+f_{ttt}f_{xxx}-f_{ttx}f_{txx}+f_{ttt}f_{txx}-f_{xtt}^2+f_{xxx}f_{xtt}
-f_{xxt}^2=0$, and the corresponding equation obtained by exchanging $t$ and
$x$. Both equations admit homogeneous third-order Hamiltonian structures which
are equivalent to the one from Example 3. Operators from Examples 1-3 will
feature in the classification results below.

\section{Summary of main results}
\label{sec:summary-main-results}

Our first observation (Proposition 1 of Section \ref{sec:geom-interpr-poiss})
is that equations \eqref{eq:3} can be rewritten in terms of the metric $g$
alone, implying the linear subsystem
\begin{equation}
  g_{mk,n}+g_{kn,m}+g_{mn,k}=0,
  \label{Killing}
\end{equation}
along with a more complicated set of nonlinear constraints,
\begin{equation}
  \begin{array}{c}
    g_{m[k,n]l}=-\frac{1}{3}g^{pq} g_{p[l,m]}g_{q[k,n]},
  \end{array}
  \label{nonlin}
\end{equation}
where square brackets denote antisymmetrisation. Any solution to these
equations specifies a third-order Hamiltonian operator of the form
(\ref{casimir}) by setting $ c_{nkm}=\frac{1}{3}g_{n[m,k]}$.

Our second remark is that the generic metric $g=g_{ij}du^idu^j$ satisfying the
linear subsystem (\ref{Killing}) is an arbitrary quadratic expression in $du^i$
and $u^jdu^k-u^kdu^j$, explicitly,
\begin{equation}
  \begin{array}{c}
    g_{ij}du^idu^j=a_{ij}du^idu^j + b_{ijk}du^i(u^jdu^k-u^kdu^j) +
    \\
    c_{ijkl}(u^idu^j-u^jdu^i)(u^kdu^l-u^ldu^k),
  \end{array}
  \label{Monge}
\end{equation}
where $a_{ij}, \ b_{ijk}, \ c_{ijkl}$
are arbitrary constants.

Since the flat coordinates are defined up to affine transformations, the system
(\ref{Killing})-(\ref{nonlin}) is invariant under point transformations of the
form
$$
\tilde u^i= l^i({\bf u}), ~~~ \tilde g= g,
$$
where $l^i$ are arbitrary linear forms in the flat coordinates ${\bf u}=(u^1,
\dots, u^n)$, and $\tilde g=g$ indicates that $g$ transforms as a metric. What
is less obvious is that the system (\ref{Killing})-(\ref{nonlin}) is invariant
under the bigger group of projective transformations,
$$
\tilde u^i= \frac{l^i({\bf u})}{l({\bf u})}, ~~~ \tilde g= \frac{g}{l^4({\bf
    u})},
$$
where $l$ is yet another linear form in the flat coordinates.  It will be
demonstrated in Section \ref{sec:proj} that projective transformations
correspond to reciprocal transformations of the Hamiltonian operator
(\ref{casimir}).  Note that the ansatz (\ref{Monge}) is invariant under
projective transformations indicated above.  One can thus formulate two natural
classification problems: affine and projective classifications.

Metrics of the form (\ref{Monge}) typically arise as Monge metrics of quadratic
line complexes. Recall that a quadratic line complex is a $(2n-3)$-parameter
family of lines in the projective space $\mathbb{P}^n$ specified by a single
quadratic equation in the Pl\"ucker coordinates. Fixing a point $p\in
\mathbb{P}^n$ and taking all lines of the complex which pass though $p$ we
obtain a quadratic cone with vertex at $p$. This field of cones supplies
$\mathbb{P}^n$ with a conformal structure (Monge metric) whose general form is
given by (\ref{Monge}), see Section \ref{sec:geom-interpr-poiss} for more
details. The key invariant of a quadratic line complex is its singular variety
(which is a hypersurface in $\mathbb{P}^n$ of degree $2n-2$, see
\cite{Dolgachev}, Prop. 10.3.2), defined by the equation
$$
\det g_{ij}=0.
$$
For $n=2$ the singular variety is a conic in $\mathbb{P}^2$, for $n=3$ it is
the Kummer quartic in $\mathbb{P}^3$, etc.

Taking a generic Monge metric (\ref{Monge}), bringing it to a suitable normal
form via affine/projective transformations, and verifying the remaining
nonlinear constraints (\ref{nonlin}) one can obtain a classification of
third-order Hamiltonian operators.  Due to the complexity of nonlinear
constraints, we only managed to complete this programme in two- and
three-component cases (Section \ref{sec:class-results}), note that any
one-component operator is equivalent to $\ddx{3}$. We observe that the singular
varieties of Monge metrics corresponding to homogeneous third-order Hamiltonian
operators degenerate into double hypersurfaces of degree $n-1$.  Our
classification results are summarised below (in the two-component case we give
both affine and projective classifications, in the three-component situation
the affine classification contains too many special cases and moduli, and is
omitted):

\medskip

\noindent {\bf Two-component case} (Theorem \ref{sec:two-component-case-1} of
Section \ref{sec:class-results}). {\it Modulo (complex) affine transformations,
  the metric of any two-component homogeneous third-order Hamiltonian operator
  can be reduced to one of the three canonical forms}:
\begin{displaymath}
  g^{(1)}=\left(
    \begin{array}{cc}
      (u^{2})^{2}+1 &-u^{1}u^{2} \\-u^{1}u^{2} & (u^{1})^{2}
    \end{array}
  \right), ~~
  g^{(2)}=\left(
    \begin{array}{cc}
      -2u^{2} & u^{1} \\
      u^{1} & 0
    \end{array}
  \right), ~~
  g^{(3)}=\left(
    \begin{array}{cc}
      1 & 0 \\
      0 & 1
    \end{array}
  \right).
\end{displaymath}
The metric $g^{(2)}$ corresponds to the third-order Hamiltonian operator from
Example 1 of Section \ref{sec:examples}. One can verify that the metric
$g^{(2)}$ is flat, while $g^{(1)}$ is not flat. The singular varieties of the
first two metrics are double lines: $(u^1)^2=0$. Applying a
projective transformation which sends this line to the line at infinity, one
can reduce the first two cases to constant coefficients. This leads to our
second result (Theorem \ref{sec:two-component-case-2}):

{\it Modulo projective transformations, any two-component homogeneous
  third-order Hamiltonian operator can be reduced to constant form.  }

\medskip

\noindent {\bf Three-component case} (Theorem \ref{sec:three-component-case-1}
of Section \ref{sec:class-results}).  {\it Modulo (complex) projective
  transformations, the metric of any three-component homogeneous third-order
  Hamiltonian operator can be reduced to one of the six canonical forms}:
\begin{gather*}\footnotesize
  g^{(1)}=\begin{pmatrix} (u^{2})^{2}+c & -u^{1}u^{2}-u^{3} & 2u^{2} \\
    -u^{1}u^{2}-u^{3} & (u^{1})^{2}+c(u^{3})^{2} & -cu^{2}u^{3}-u^{1} \\ 2u^{2}
    & -cu^{2}u^{3}-u^{1} & c(u^{2})^{2}+1
  \end{pmatrix}, \\ \footnotesize g^{(2)} = \begin{pmatrix}
    (u^{2})^{2}+1 & -u^{1}u^{2}-u^{3} & 2u^{2} \\
    -u^{1}u^{2}-u^{3} & (u^{1})^{2} & -u^{1} \\
    2u^{2} & -u^{1} & 1
  \end{pmatrix}, \quad g^{(3)} = \begin{pmatrix}
    (u^{2})^{2}+1 &  -u^{1}u^{2}&0 \\
    -u^1u^2 & (u^1)^2 & 0 \\
    0 & 0 & 1%
  \end{pmatrix}, \\ \footnotesize g^{(4)}= \begin{pmatrix} -2u^2 & u^1 & 0
    \\
    u^1 & 0 & 0
    \\
    0 & 0 & 1
  \end{pmatrix}, \quad g^{(5)}=\begin{pmatrix} -2u^2 & u^1 & 1
    \\
    u^1 & 1 &0
    \\
    1 & 0 & 0
  \end{pmatrix}, \quad g^{(6)} =
  \begin{pmatrix}
    1 & 0 & 0\\ 0 & 1 & 0\\ 0 & 0 & 1
  \end{pmatrix}.
\end{gather*}
The corresponding singular varieties, $\det g=0$, are as follows (see Theorem
\ref{sec:three-component-case-1} for explicit formulae):
\begin{itemize}
\item $g^{(1)}, g^{(2)}$: double quadric;
\item $g^{(3)}, g^{(4)}$: two double planes, one of them at infinity;
\item $g^{(5)}, g^{(6)}$: quadruple plane at infinity.
\end{itemize}
Third-order Hamiltonian operators corresponding to the metrics $g^{(3)}$ and
$g^{(4)}$ are direct sums of the two-component operators from Theorem
\ref{sec:two-component-case-1}, and the one-component operator $\ddx{3}$ (we
emphasize that these direct sums can't be transformed to constant form, even
by projective transformations). As the correspondence between  Monge
metrics and  Hamiltonian operators \eqref{casimir} respects direct sums,  two-component operators are expected to appear in the three-component
classification. The metrics $g^{(5)}$ and $g^{(4)}$ correspond to Hamiltonian
operators discussed in Examples 2, 3 of Section \ref{sec:examples}. The metrics
$g^{(1)}$ and $g^{(2)}$ give rise to third-order operators which are apparently
new. Direct calculations demonstrate that the metrics $g^{(4)}, g^{(5)},
g^{(6)}$ are flat, while $g^{(1)}, g^{(2)}, g^{(3)}$ are not flat (not even
conformally flat: they have non-vanishing Cotton tensor).


\section{Monge metrics and quadratic line complexes}
\label{sec:geom-interpr-poiss}

Our first observation is that system \eqref{eq:3} can be rewritten in terms of
the metric $g$ alone:


\medskip

\noindent {\bf Proposition 1}.
{\it The system \eqref{eq:3} implies
$$
c_{nkm}=\frac{1}{3}(g_{nm,k}-g_{nk,m})=\frac{1}{3}g_{n[m,k]},
$$
and the elimination of $c$ results in (\ref{Killing}), (\ref{nonlin}):
$$
g_{mk,n}+g_{kn,m}+g_{mn,k}=0,\label{eq:18}
 $$
 $$
 g_{m[k,n]l}=-\frac{1}{3}g^{pq} g_{p[l,m]}g_{q[k,n]},
  $$
  here square brackets denote antisymmetrisation. }

\begin{proof}

  Taking into account that $c$ is skew-symmetric in the last two indices, the
  relation \eqref{eq:4} implies
 $$
 c_{mnk}=c_{nkm}-g_{mn, k}, ~~~ c_{kmn}=c_{nkm}+g_{kn, m}.
 $$
 Substituting this into \eqref{eq:6} we obtain the explicit formula for $c$,
 $$
 c_{nkm}=\frac{1}{3}(g_{nm,k}-g_{nk,m})=\frac{1}{3}g_{n[m,k]}.
  $$
  With this expression for $c$, the relations \eqref{eq:4}-\eqref{eq:6} reduce
  to the linear system (\ref{Killing}) for $g$:
$$
g_{mk,n}+g_{kn,m}+g_{mn,k}=0.
$$
Finally, \eqref{eq:7} gives the nonlinear constraint (\ref{nonlin}).
\end{proof}

Note that the linear system (\ref{Killing}) can be solved explicitly: any such
metric $g=g_{ij}du^idu^j$ is an arbitrary quadratic expression of the form
(\ref{Monge}) in $du^i$ and $u^jdu^k-u^kdu^j$:
$$
\begin{array}{c}
  g_{ij}du^idu^j=a_{ij}du^idu^j + b_{ijk}du^i(u^jdu^k-u^kdu^j) +
  \\
  c_{ijkl}(u^idu^j-u^jdu^i)(u^kdu^l-u^ldu^k),
\end{array}
  $$
  here the coefficients $a_{ij}, b_{ijk}, c_{ijkl}$ are arbitrary constants
  (without any loss of generality one can impose additional symmetries such as
  $a_{ij}=a_{ij}$, $b_{ijk}=-b_{ikj}$, $c_{ijkl} = -c_{jikl} = -c_{ijlk}$ ,
  etc).  The above formula follows from the analogous result for Killing
  bivectors in pseudo-Euclidean spaces: any Killing bivector is a quadratic
  expression in Killing vectors.  Formula (\ref{Monge}) implies that the
  coefficients $g_{ij}$ are at most quadratic in the flat coordinates $u^i$,
  the fact observed previously in \cite{GP91, Doyle}.

  Metrics of the form (\ref{Monge}) appear in the theory of quadratic complexes
  of lines in the projective space $\mathbb{P}^n$. Let us recall the main
  construction. Consider two points in $\mathbb{P}^n$ with homogeneous
  coordinates $u^i, v^i$, $i=1, \dots, n+1$. The Pl\"ucker coordinates $p^{ij}$
  of the line through these points are defined as $p^{ij}=u^iv^j - u^jv^i$.
  They satisfy a system of quadratic relations of the form
  $p^{ij}p^{kl}+p^{ki}p^{jl}+p^{jk}p^{il}=0$, which specify a projective
  embedding of the Grassmannian of lines (Pl\"ucker embedding). For $n=3$ we
  have a single quadratic relation $ p^{12}p^{34} + p^{31}p^{24} +
  p^{14}p^{23}=0$, known as the Pl\"ucker quadric.  A quadratic line complex is
  defined by an additional homogeneous quadratic equation in the Pl\"ucker
  coordinates,
$$
Q(p^{ij})=0.
$$
This specifies a $(2n-3)$-parameter family of lines in $\mathbb{P}^n$. Fixing a
point $p\in \mathbb{P}^n$ and taking all lines of the complex which pass though
$p$ we obtain a quadratic cone with vertex at $p$. This family of cones
supplies $\mathbb{P}^n$ with a conformal structure (Monge metric) whose
explicit form can be obtained as follows. Let us set $v^i=u^i+du^i$.  Then the
Pl\"ucker coordinates take the form $p^{ij}=u^idu^j-u^jdu^i$. In the affine
chart $u^{n+1}=1, \ du^{n+1}=0$, part of the Pl\"ucker coordinates simplify to
$p^{(n+1) i}=du^i$, and the equation of the complex takes the so-called Monge
form:
$$
Q(du^i, u^jdu^k-u^kdu^j)=0,
$$
here $i, j, k=1, \dots, n$. This is nothing but the general metric
(\ref{Monge}).
What renders the classification of three-component homogeneous third-order
Hamiltonian operators possible, is the existing classification of quadratic
line complexes in $\mathbb{P}^3$ \cite{Jess}.

\section{Projective invariance and reciprocal transformations }
\label{sec:proj}

As the flat coordinates $u^i$ are defined up to affine transformations, the
system (\ref{Killing})-(\ref{nonlin}) is invariant under transformations of the
form
$$
\tilde u^i= l^i({\bf u}), ~~~ \tilde g= g,
$$
where $l^i$ are linear forms in the flat coordinates ${\bf u}=(u^1, \dots,
u^n)$, and $\tilde g=g$ indicates that $g$ transforms as a metric (with low
indices). On the other hand, the relation to quadratic line complexes indicates
that our problem is projectively-invariant. Indeed, the system
(\ref{Killing})-(\ref{nonlin}) is invariant under the group of projective
transformations of the form
\begin{equation}
  \tilde u^i= \frac{l^i({\bf u})}{l({\bf u})}, ~~~ \tilde g= \frac{g}{l^4({\bf u})},
  \label{pr}
\end{equation}
where $l$ is yet another linear form in the flat coordinates. Note that the
Monge form (\ref{Monge}) is also invariant under projective transformations
(\ref{pr}).

It turns out that projective transformations (\ref{pr}) correspond to
reciprocal transformations of the corresponding Hamiltonian operator
(\ref{casimir}).  We recall that a reciprocal transformation is a nonlocal
change of the independent variable $x$ defined as
\begin{equation}
  d\tilde x=A({\bf u}) dx,
  \label{recip}
\end{equation}
where $A({\bf u})$ is a function of field variables. Reciprocal transformations
of Hamiltonian operators of hydrodynamic type were investigated previously in
\cite{Fer95, fp, Abenda}. In general, transformed operators become nonlocal. It
is remarkable that in the special case when $A({\bf u})$ is linear in the flat
coordinates, reciprocal transformations preserve the locality of third-order
operators (\ref{casimir}).

\medskip

\noindent{\bf Proposition 2.} {\it
  The class of homogeneous third-order  Hamiltonian operators \eqref{casimir} is
  invariant under reciprocal transformations of the form (\ref{recip}), where
  $A({\bf u})$ is linear in the flat coordinates $u^i$. Reciprocal
  transformations induce projective transformations (\ref{pr}) of the corresponding Monge
  metrics.}

\begin{proof}
Let us set $A=c_iu^i+c_0$. In the new independent variable $\tilde x$, the Casimir functionals, $\int u^i dx$, take the form $\int\frac{u^i}{A} d\tilde x$.
Thus, the transformed Casimir densities are $\tilde u^i=\frac{u^i}{A}$, which is a particular case of  (\ref{pr}). The general case of (\ref{pr}) is obtained by combining the above transformation  with arbitrary affine changes of $u^i$.

The second formula (\ref{pr}) results from the following calculation. Consider two functionals,
$F=\int f({\bf u}) dx$ and $H=\int h({\bf u}) dx$ (for simplicity we restrict to functionals of hydrodynamic type). Their Poisson bracket equals
$$
\{ F, H\}=\int f_iP^{ij}h_j dx=\int f_i\ddx{}(g^{ij}\ddx{}+c^{ij}_ku^k_x)\ddx{} h_j dx.
$$
Using $\int f dx=\int \tilde f d\tilde x, \ \int h dx=\int \tilde h d\tilde x$ where $f=A\tilde f, \ h=A\tilde h$, and making the substitutions  $dx\to \frac{1}{A} d\tilde x, \  \ddx{} \to A\tilde {\ddx{}}$ (here $\tilde {\ddx{}}\equiv \ddx{}_{\tilde x}$), one obtains
$$
\{ F, H\}=\int (A\tilde f_i+c_i \tilde f){\underline A}\tilde {\ddx{}}(Ag^{ij}\tilde {\ddx{}}+Ac^{ij}_ku^k_{\tilde x})A\tilde {\ddx{}} (A\tilde h_j+c_j\tilde h) \underline{\frac{1}{A}}d\tilde x.
$$
Cancelling the underlined terms one can observe that, in spite of the explicit presence of $\tilde f$ and $\tilde h$ in the integrand (which may potentially lead to non-locality of the transformed operator), the fact that they appear with constant coefficients $c_i$ allows one to rewrite the above expression in the form
$$
\{ F, H\}=\int \tilde f_i\tilde P^{ij}\tilde h_j d\tilde x,
$$
where $\tilde P$ is a local homogeneous third-order operator with the leading term $A^4g^{ij}\tilde{\ddx{3}}$.
Thus, $\tilde g^{ij}=A^4g^{ij}$, which is equivalent to the second formula (\ref{pr}) (recall that in (\ref{pr}) $g$ denotes the metric with low indices).

In the new Casimirs $\tilde u^i$, the transformed operator $\tilde P$ can be computed in the following way. Taking into account that
$
\tilde{u}^{k}=\frac{u^{k}}{A},
$
where $A=c_{m}u^{m}+c_{0}=c_{0}(1-c_{m}\tilde{u}^{m})^{-1}$, one obtains%
\begin{equation*}
\frac{\partial \tilde{u}^{k}}{\partial u^{i}}=\frac{\delta _{i}^{k}-c_{i}%
\tilde{u}^{k}}{A}.
\end{equation*}%
Thus,
\begin{equation*}
A\frac{\partial \tilde{f}}{\partial u^{i}}=(\delta _{i}^{k}-c_{i}\tilde{u}%
^{k})\frac{\partial \tilde{f}}{\partial \tilde{u}^{k}},
\end{equation*}%
so that the bracket
\begin{equation*}
\{F,H\}=\int \left( A\frac{\partial \tilde{f}}{\partial u^{i}}+c_{i}\tilde{f}%
\right) \tilde{D}(Ag^{ij}\tilde{D}+Ac_{k}^{ij}u_{\tilde{x}}^{k})A\tilde{D}%
\left( A\frac{\partial \tilde{h}}{\partial u^{j}}+c_{j}\tilde{h}\right) d%
\tilde{x}
\end{equation*}%
assumes the form
\begin{equation*}
\{F,H\}=\int \left( (\delta _{i}^{k}-c_{i}\tilde{u}^{k})\frac{\partial
\tilde{f}}{\partial \tilde{u}^{k}}+c_{i}\tilde{f}\right) \tilde{D}(Ag^{ij}%
\tilde{D}+Ac_{k}^{ij}u_{\tilde{x}}^{k})A\tilde{D}\left( (\delta
_{j}^{k}-c_{j}\tilde{u}^{k})\frac{\partial \tilde{h}}{\partial \tilde{u}^{k}}%
+c_{j}\tilde{h}\right) d\tilde{x}.
\end{equation*}%
Using the identity
\begin{equation*}
\tilde{D}\left( (\delta _{j}^{k}-c_{j}\tilde{u}^{k})\frac{\partial \tilde{h}%
}{\partial \tilde{u}^{k}}+c_{j}\tilde{h}\right) =(\delta _{j}^{k}-c_{j}%
\tilde{u}^{k})\tilde{D}\frac{\partial \tilde{h}}{\partial \tilde{u}^{k}},
\end{equation*}%
one obtains%
\begin{equation*}
\{F,H\}=\int \frac{\partial \tilde{f}}{\partial \tilde{u}^{k}}(\delta
_{i}^{k}-c_{i}\tilde{u}^{k})\tilde{D}(Ag^{ij}\tilde{D}+Ac_{k}^{ij}u_{\tilde{x%
}}^{k})A(\delta _{j}^{k}-c_{j}\tilde{u}^{k})\tilde{D}\frac{\partial \tilde{h}%
}{\partial \tilde{u}^{k}}d\tilde{x}
\end{equation*}%
\begin{equation*}
-\int c_{i}\tilde{D}\tilde{f}\cdot (Ag^{ij}\tilde{D}+Ac_{k}^{ij}u_{\tilde{x}%
}^{k})A(\delta _{j}^{k}-c_{j}\tilde{u}^{k})\tilde{D}\frac{\partial \tilde{h}%
}{\partial \tilde{u}^{k}}d\tilde{x}.
\end{equation*}%
Since $\tilde{D}\tilde{f}=(\partial \tilde{f}/\partial \tilde{u}^{m})\tilde{u%
}_{\tilde{x}}^{m}$, one arrives at%
\begin{equation*}
\{F,H\}=\int \frac{\partial \tilde{f}}{\partial \tilde{u}^{m}}[(\delta
_{i}^{m}-c_{i}\tilde{u}^{m})\tilde{D}-c_{i}\tilde{u}_{\tilde{x}}^{m}](Ag^{ij}%
\tilde{D}+Ac_{k}^{ij}u_{\tilde{x}}^{k})A(\delta _{j}^{k}-c_{j}\tilde{u}^{k})%
\tilde{D}\frac{\partial \tilde{h}}{\partial \tilde{u}^{k}}d\tilde{x}.
\end{equation*}%
Using the identity
\begin{equation*}
(\delta _{i}^{m}-c_{i}\tilde{u}^{m})\tilde{D}-c_{i}\tilde{u}_{\tilde{x}}^{m}=%
\tilde{D}(\delta _{i}^{m}-c_{i}\tilde{u}^{m}),
\end{equation*}%
one ultimately arrives at the  Hamiltonian operator $\tilde P$  written in the transformed Casimirs $\tilde u^i$:%
\begin{equation*}
\tilde{P}^{ij}=\tilde{D}(\delta _{i}^{m}-c_{i}\tilde{u}^{m})A(g^{ij}\tilde{D}%
+c_{k}^{ij}u_{\tilde{x}}^{k})A(\delta _{j}^{k}-c_{j}\tilde{u}^{k})\tilde{D}.
\end{equation*}%
This is again a local homogeneous  third-order expression of the form (\ref{casimir}).
\end{proof}

\section{Classification results}
\label{sec:class-results}

In this section we classify homogeneous third-order Hamiltonian operators with
the number of components $n=1,2$ and $3$.  Our approach is based on the
correspondence with Monge metrics and quadratic line complexes. All
classification results are obtained modulo (complex) projective transformations
as introduced in Section \ref{sec:proj}.  To save space we only present
canonical forms for the corresponding Monge metrics rather than Hamiltonian
operators themselves.

\subsection{One-component case}

Any one-component operator can be reduced to $\ddx{3}$, see \cite{GP91, GP97,
  Doyle}.  Indeed, in this case system ({\ref{eq:3}) implies $g_{11,1}=c_{111}=0$.

\subsection{Two-component case}
\label{sec:two-component-case}

Here we provide both affine and projective classifications. The main results
are summarised below.

\begin{theorem}\label{sec:two-component-case-1}
  Modulo (complex) affine transformations, the metric of any two-component
  homogeneous third-order Hamiltonian operator can be reduced to one of the
  three canonical forms:
  \begin{displaymath}
    g^{(1)}=\left(
      \begin{array}{cc}
	(u^{2})^{2}+1 & -u^{1}u^{2} \\-u^{1}u^{2} & (u^{1})^{2}
      \end{array}
    \right),\quad
    g^{(2)}=\left(
      \begin{array}{cc}
	-2u^{2} & u^{1} \\
	u^{1} & 0
      \end{array}
    \right),
    \quad
    g^{(3)}=\left(
      \begin{array}{cc}
	1 & 0 \\
	0 & 1
      \end{array}
    \right).
  \end{displaymath}%
\end{theorem}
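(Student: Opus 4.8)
The plan is to use the correspondence with quadratic line complexes in $\mathbb{P}^2$ together with the nonlinear constraints (\ref{nonlin}). For $n=2$, the Monge metric (\ref{Monge}) has a small number of free constants, and I would first write down the general two-component metric satisfying the linear subsystem (\ref{Killing}) explicitly: its entries $g_{11}, g_{12}, g_{22}$ are quadratic polynomials in $u^1, u^2$ built from $a_{ij}$, $b_{ijk}$, $c_{ijkl}$ as in (\ref{Monge}). Equivalently, I would parametrise the general quadratic line complex $Q(p^{12}, p^{13}, p^{23})=0$ in $\mathbb{P}^3$ — but for $n=2$ it is cleaner to note that the Monge metric is just an arbitrary quadratic form in the three variables $(du^1, du^2, u^1du^2-u^2du^1)$, i.e.\ determined by a symmetric $3\times 3$ matrix $M$ of constants.

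The second step is to exploit the known projective classification of conics: the singular variety $\det g_{ij}=0$ is a conic in $\mathbb{P}^2$ (a curve of degree $2n-2=2$), and the action of the affine (resp.\ projective) group on Monge metrics is compatible with the action on this conic. I would therefore split into cases according to the rank and type of the conic $\det g = 0$ — rank $3$ (smooth conic, reducible to $g^{(1)}$-type with parameter eventually normalised by scaling), rank $2$ (pair of distinct lines), rank $1$ (double line, the $g^{(2)}$ case), rank $0$ (the conic is identically zero, forcing $g$ to have constant coefficients up to an affine change, the $g^{(3)}$ case). In each stratum I would bring $\det g=0$ to a normal form by an affine transformation $u \mapsto l(u)$, use the residual stabiliser to further normalise $g$, and then impose the nonlinear constraints (\ref{nonlin}) to pin down the remaining constants. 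The constraints (\ref{nonlin}) are quadratic in the entries of $g$ and their derivatives; since the $g_{ij}$ are at most quadratic polynomials, (\ref{nonlin}) becomes a finite system of polynomial equations in the constants $a_{ij}, b_{ijk}, c_{ijkl}$, solvable by hand (or by the computer algebra mentioned in the paper).

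The main obstacle I expect is not any single computation but the bookkeeping of the case analysis: ensuring that the strata by conic type are exhaustive, that within each stratum I have used the \emph{full} residual affine stabiliser before imposing (\ref{nonlin}), and that no solution is counted twice or missed. In particular I would need to check carefully that the smooth-conic case collapses to exactly $g^{(1)}$ (the naive expectation is a one-parameter family, but scaling $u^i$ should absorb the parameter — this must be verified against (\ref{nonlin}), which may itself restrict the parameter), and that the rank-$2$ case does not produce a genuinely new operator but degenerates into one of the three listed forms or is excluded by (\ref{nonlin}) and the non-degeneracy hypothesis $\det g^{ij}\neq 0$. A secondary subtlety is that we are working over $\mathbb{C}$, so conic types that would be distinct over $\mathbb{R}$ (e.g.\ $g^{(3)}$ with signature issues) coalesce — I would state the classification modulo complex affine transformations throughout, as in the theorem statement, and only at the end remark that $g^{(2)}$ reproduces the operator of Example 1 and that $g^{(1)}$ is not flat whereas $g^{(2)}$ is.
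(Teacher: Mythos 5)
Your setup coincides with the paper's: the general two-component solution of (\ref{Killing}) is written as $g=UQU^{t}$ with $U=(u^1du^2-u^2du^1,\ du^1,\ du^2)$ and $Q$ a constant symmetric $3\times3$ matrix, after which one imposes (\ref{nonlin}) and normalises by affine transformations. Where you diverge is in the organisation of the case analysis, and here the paper's route is substantially cleaner for two reasons you do not anticipate. First, for $n=2$ the whole nonlinear system (\ref{nonlin}) collapses to the \emph{single} cubic condition $\det Q=0$ on the constant matrix $Q$; there is no ``finite system of polynomial equations to be solved by hand'' beyond this one equation, and the paper then simply splits on whether the $(1,1)$-entry $r$ of $Q$ (the coefficient of $(u^1du^2-u^2du^1)^2$) vanishes: $r\neq0$ gives $g^{(1)}$ after shifts and a rescaling, $r=0$ gives $g^{(2)}$ or the constant case. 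Second, your proposed stratification by the rank of the singular conic $\det g_{ij}=0$ would work in principle, but your expectation that the smooth-conic stratum ``collapses to exactly $g^{(1)}$'' is wrong: the constraint $\det Q=0$ forces the singular conic of \emph{every} nonconstant solution to degenerate to a double line (indeed $\det g^{(1)}=\det(-g^{(2)})=(u^1)^2$), so the rank-$3$ and rank-$2$ strata are empty and all the content sits in the double-line and constant strata. You would discover this in carrying out your plan, so it is not a fatal gap, but it means the conic-type stratification does not separate $g^{(1)}$ from $g^{(2)}$ and you would still need an invariant internal to the double-line stratum (the paper uses $r\neq0$ versus $r=0$, equivalently the degrees of the coefficients of $g$) both to split the cases and to certify that $g^{(1)}$, $g^{(2)}$, $g^{(3)}$ are pairwise affinely inequivalent --- a final step your proposal omits.
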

 The metric $g^{(1)}$ gives rise to the Hamiltonian operator
\begin{equation*}
 \ddx{}\left(
\begin{array}{cc}
\ddx{} & \displaystyle\ddx{} \frac{u^2}{u^1} \\
\displaystyle\frac{u^2}{u^1}\ddx{} & \displaystyle
\frac{(u^2)^2+1}{2(u^1)^2}\ddx{}+\ddx{}\frac{(u^2)^2+1}{2(u^1)^{2}}
\end{array}%
\right) \ddx{},
\end{equation*}
 the metric $g^{(2)}$ corresponds to the  Hamiltonian operator from Example 1 of Section \ref{sec:examples},
\begin{equation*}
 \ddx{}\left(
\begin{array}{cc}
0 & \displaystyle\ddx{}\frac{1}{u^1} \\
\displaystyle\frac{1}{u^1}\ddx{} & \displaystyle
\frac{u^2}{(u^1)^{2}}\ddx{}+\ddx{}\frac{u^2}{(u^1)^{2}}
\end{array}%
\right) \ddx{}.
\end{equation*}
Note that these third-order operators are compatible (form a Hamiltonian pair).
The singular varieties of the first two metrics are double lines:
$(u^1)^2=0$. Applying a projective transformation which sends this line to the
line at infinity, one can reduce the first two cases to constant form. This
leads to

\begin{theorem}\label{sec:two-component-case-2}
  Modulo projective transformations, any two-component homogeneous third-order
  Hamiltonian operator can be reduced to constant coefficient form.
\end{theorem}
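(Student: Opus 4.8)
The plan is to use Theorem~\ref{sec:two-component-case-1} as the starting point: every two-component homogeneous third-order Hamiltonian operator has a Monge metric affine-equivalent to one of $g^{(1)}$, $g^{(2)}$, $g^{(3)}$. Since $g^{(3)}$ already has constant coefficients, it suffices to show that $g^{(1)}$ and $g^{(2)}$ can be brought to constant form by a projective transformation of the form~(\ref{pr}). The key geometric observation, already noted in the excerpt, is that the singular varieties $\det g^{(1)}=0$ and $\det g^{(2)}=0$ are both the double line $(u^1)^2=0$. A direct computation confirms this: $\det g^{(1)}=(u^1)^2$ and $\det g^{(2)}=-(u^1)^2$.

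First I would recall from Proposition~2 (and the surrounding discussion) that projective transformations $\tilde u^i = l^i(\mathbf u)/l(\mathbf u)$, $\tilde g = g/l^4(\mathbf u)$ preserve the class of Monge metrics and the system~(\ref{Killing})--(\ref{nonlin}), and that under such a transformation the singular variety $\det g=0$ is carried to $\det\tilde g=0$ as a projective hypersurface. The strategy is then to choose the linear form $l(\mathbf u)$ so that the double line $u^1=0$ is sent to the line at infinity: concretely, take $l(\mathbf u)=u^1$ (together with suitable choices of $l^1, l^2$, e.g. $l^1(\mathbf u)=1$, $l^2(\mathbf u)=u^2$, so that $\tilde u^1 = 1/u^1$, $\tilde u^2 = u^2/u^1$). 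Because $\det g$ vanishes on the double line and is quadratic, the transformed metric $\tilde g = g/(u^1)^4$ has determinant $\det\tilde g = \det g/(u^1)^8 = \pm (u^1)^2/(u^1)^8$, and after re-expressing in the $\tilde u^i$ coordinates this determinant becomes a nonzero constant. A metric of the form~(\ref{Monge}) whose determinant is constant (the singular variety having degenerated to the empty set / the whole line at infinity with no affine part) must itself have constant coefficients — one sees this because the non-constant part of~(\ref{Monge}) contributes terms that force $\det g$ to vanish somewhere in the finite chart unless those coefficients are zero.

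The cleanest way to organize the proof is computational rather than conceptual: apply the explicit substitution $u^1 = 1/\tilde u^1$, $u^2 = \tilde u^2/\tilde u^1$ to $g^{(1)}$ and to $g^{(2)}$, multiply by the Jacobian factors dictated by $\tilde g = g/(u^1)^4$ (equivalently, transform $g_{ij}\,du^i du^j$ as a quadratic differential and divide by $l^4$), and check directly that the result is a constant matrix. For $g^{(2)}$ this should produce (up to affine normalization) the constant form, and likewise for $g^{(1)}$. I would present this as a short verification, possibly citing that the computation was done in CDIFF. Finally, combining this with $g^{(3)}$, every two-component operator is projectively equivalent to one with constant Monge metric, hence to a constant-coefficient third-order Hamiltonian operator via~(\ref{casimir}) with $c^{ij}_k=0$.

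The main obstacle I anticipate is purely bookkeeping: getting the Jacobian/weight factors in the transformation law $\tilde g = g/l^4(\mathbf u)$ exactly right when passing between the "metric" picture and the coordinate picture, and making sure the chosen $l^i(\mathbf u)$ indeed produce an invertible point transformation on the relevant chart. There is also a minor subtlety that the transformation is only birational (it blows down the line $u^1=0$), so one should be slightly careful that the end result is genuinely a well-defined constant metric in the new affine coordinates and not merely a formal expression; but since~(\ref{Monge}) is projectively invariant and the determinant is constant, this is automatic. No deep difficulty is expected — the theorem is essentially a corollary of Theorem~\ref{sec:two-component-case-1} plus Proposition~2.
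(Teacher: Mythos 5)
Your proposal is correct and follows essentially the same route as the paper: the paper's proof consists precisely of applying the projective transformation $\tilde u^1=1/u^1$, $\tilde u^2=u^2/u^1$, $\tilde g=g/(u^1)^4$ (chosen because the singular double line $(u^1)^2=0$ must go to infinity) and verifying directly that $g^{(1)}$ and $g^{(2)}$ become $(d\tilde u^1)^2+(d\tilde u^2)^2$ and $-2\,d\tilde u^1 d\tilde u^2$, with $g^{(3)}$ already constant. The only remark is that your intermediate heuristic ``constant determinant of a Monge metric forces constant coefficients'' is not justified as stated and is not needed, since you (like the paper) ultimately rely on the explicit substitution.
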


\begin{proof}[Proof of Theorem \ref{sec:two-component-case-1}]

  Setting $ U=(u^1du^2-u^2du^1, \ du^1, \ du^2), $ one can represent a generic
  two-component Monge metric in the form $ g=UQU^t $ where $Q$ is a constant
  $3\times 3$ symmetric matrix. Setting
$$
Q=\left(
  \begin{array}{ccc}
    r & h_1 &-h_2 \\
    h_1 & f_{11}&f_{12}\\
    -h_2&f_{12}&f_{22}
  \end{array}
\right)
$$
one obtains, explicitly,
\begin{equation}
  \label{eq:35}
  g= r(u^1du^2-u^2du^1)^2 + 2(u^1du^2-u^2du^1)(h_1du^1 - h_2du^2)
  +f_{ij}du^idu^j.
\end{equation}
Equations (\ref{nonlin})  impose a single cubic constraint,
\begin{equation}\label{eq:27}
  \det Q=r(f_{11}f_{22}-f_{12}^{2})-h_{1}^{2}f_{22}-h_{2}^{2}f_{11}-2h_{1}h_{2}f_{12}=0.
\end{equation}
We have the following cases.
\begin{description}
\item[Case 1:] $r\neq 0$. By scaling the dependent variables one can set
  $r=1$. By shifting the dependent variables one can also assume that
  $h_1=h_2=0$, so that the constraint \eqref{eq:27} simplifies to
  $f_{11}f_{22}-f_{12}^2=0$. This means that the constant part of the metric
  (\ref{eq:35}), $f_{ij}du^idu^j$, is degenerate, and therefore can be reduced
  to $(du^1)^2$ by an affine transformation.  This results in the metric
  \begin{equation*}
    g^{(1)}=\left(
      \begin{array}{cc}
	(u^{2})^{2}+1 & -u^{1}u^{2} \\
	-u^{1}u^{2} & (u^{1})^{2}
      \end{array}
    \right).
  \end{equation*}

\item[Case 2:] $r=0$. Modulo affine transformations one can always assume
  $h_1=1, \ h_2=0$ (if both $h_1$ and $h_2$ vanish we have the constant
  coefficient case which can be reduced to $g^{(3)}$).  Then (\ref{eq:27})
  implies $f_{22}=0$, and by shifting the dependent variables one can eliminate
  $f_{11}$ and $f_{12}$. This results in
  \begin{equation*} {g}^{(2)}=\left(
      \begin{array}{cc}
	-2u^{2} & u^{1} \\
	u^{1} & 0%
      \end{array}%
    \right).
  \end{equation*}%
\end{description}


It remains to point out that the metrics ${g}^{(1)}, {g}^{(2)}, {g}^{(3)}$ are
not affinely equivalent: the degrees of their coefficients are different.
\end{proof}


\begin{proof}[Proof of Theorem \ref{sec:two-component-case-2}]

  One can verify that the projective transformation
$$
\tilde u^1=\frac{1}{u^1}, ~~~ \tilde u^2=\frac{u^2}{u^1}, ~~~ \tilde
g^{}=\frac{g^{}}{(u^1)^4}
$$
reduces the first and the second metrics to constant coefficient forms
$(d{\tilde u}^1)^2+(d{\tilde u}^2)^2$ and $-2d{\tilde u}^1d{\tilde u}^2$,
respectively.  Thus, the corresponding third-order Hamiltonian operators can be
transformed to constant forms by one and the same reciprocal transformation. This  explains their compatibility.
\end{proof}

\subsection{Three-component case}
\label{sec:three-component-case}

Here we discuss the main result of this paper -  projective classification
of three-component Hamiltonian operators (affine classification would contain
too many cases and moduli).  This is achieved by going through the list of
normal forms of quadratic line complexes in $\mathbb{P}^3$, which fall into
eleven Segre types \cite{Jess}, and calculating the nonlinear constraints (\ref{nonlin}). Let us briefly recall the main setup. Consider the Pl\"ucker
quadric $ p^{12}p^{34} + p^{31}p^{24} + p^{14}p^{23}=0$, let $\Omega$ be the
$6\times 6$ symmetric matrix of this quadratic form. A quadratic line complex
is the intersection of the Pl\"ucker quadric with another homogeneous quadratic
equation in the Pl\"ucker coordinates, defined by a $6\times 6$ symmetric
matrix $Q$.  The key invariant of a quadratic complex is the Jordan normal form
of the matrix $Q\Omega^{-1}$. It is labelled by the Segre symbol which carries
information about the number and sizes of Jordan blocks.  Thus, the symbol
$[111111]$ indicates that the Jordan form of $Q\Omega^{-1}$ is diagonal; the
symbol $[222]$ indicates that the Jordan form of $Q\Omega^{-1}$ consists of
three $2\times 2$ Jordan blocks, etc.
We will also use `refined' Segre symbols with additional round brackets
indicating coincidences among the eigenvalues of some of the Jordan blocks,
e.g., $[(11)(11)(11)]$ denotes the subcase of $[111111]$ with three pairs of
coinciding eigenvalues, the symbol $[(111)(111)]$ denotes the subcase with two
triples of coinciding eigenvalues, etc.

The Monge metric results from the equation of the complex upon setting
$p^{ij}=u^idu^j-u^jdu^i$, and using the affine chart $ u^4=1, \ du^4=0$ (in
some cases it will be more convenient to use different affine charts, say,
$u^1=1, \ du^1=0$: this will be indicated explicitly where appropriate).  The
singular surface of a generic quadratic line complex in $\mathbb{P}^3$ is
Kummer's quartic surface, which can be defined as the degeneracy locus of the
corresponding Monge metric. For Monge metrics associated to third-order
Hamiltonian operators this quartic always degenerates into a double quadric
(which may further split into a pair of planes).

\begin{theorem}\label{sec:three-component-case-1}
  Modulo (complex) projective transformations, the Monge metric of any
  three-component homogeneous third-order Hamiltonian operator can be reduced
  to one of the six canonical forms:
  \begin{enumerate}
  \item Segre type $[(111)111]$: we have a one-parameter family of metrics
    ($c\ne \pm1$):
    \begin{equation*}
      g^{(1)}=\begin{pmatrix}
	(u^{2})^{2}+c & -u^{1}u^{2}-u^{3} & 2u^{2}
	\\ -u^{1}u^{2}-u^{3} & (u^{1})^{2}+c(u^{3})^{2} & -cu^{2}u^{3}-u^{1}
	\\ 2u^{2} & -cu^{2}u^{3}-u^{1} & c(u^{2})^{2}+1
      \end{pmatrix},
    \end{equation*}
    $\det g^{(1)} =(c+1)(c-1) (u^1u^2 - u^3)^2$, the singular surface is a
    double quadric.
  \item Segre type $[(111)12]$:
    \begin{equation*}
      g^{(2)} = \begin{pmatrix}
	(u^{2})^{2}+1 & -u^{1}u^{2}-u^{3} & 2u^{2} \\
	-u^{1}u^{2}-u^{3} & (u^{1})^{2} & -u^{1} \\
	2u^{2} & -u^{1} & 1
      \end{pmatrix},
    \end{equation*}
    $\det g^{(2)} = (u^1u^2 - u^3)^2 $, the singular surface is a double
    quadric.
  \item Segre type $[11(112)]$:
    \begin{equation*}
      g^{(3)} = \begin{pmatrix}
	(u^{2})^{2}+1 &  -u^{1}u^{2}&0 \\
	-u^1u^2 & (u^1)^2 & 0 \\
	0 & 0 & 1%
      \end{pmatrix},
    \end{equation*}
    $\det g^{(3)} = (u^1)^2$, the singular surface is a pair of double planes
    (one of them at infinity).
  \item Segre type $[(114)]$:
    \begin{equation}\label{eq:32}
      g^{(4)}=   \begin{pmatrix}
	-2u^2  & u^1  &  0
	\\
	u^1  & 0 &   0
	\\
	0 & 0  & 1
      \end{pmatrix}
    \end{equation}
    $\det g^{(4)} = - (u^1)^2$, the singular surface is a pair of double planes
    (one of them at infinity).
  \item Segre type $[(123)]$:
    \begin{equation*}
      g^{(5)}=\begin{pmatrix}
	-2u^2 & u^1 & 1
	\\
	u^1 & 1 &0
	\\
	1 & 0 & 0
      \end{pmatrix},
    \end{equation*}
    $\det g^{(5)} = -1 $, the singular surface is a quadruple plane at
    infinity.
  \item Segre type $[(222)]$:
    \begin{equation*}
      g^{(6)} =
      \begin{pmatrix}
	1 & 0 & 0\\ 0 & 1 & 0\\ 0 & 0 & 1
      \end{pmatrix},
    \end{equation*}
    $\det g^{(6)} = 1 $, the singular surface is a quadruple plane at infinity.
  \end{enumerate}
\end{theorem}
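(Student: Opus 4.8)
The plan is to follow exactly the strategy indicated in the text preceding the theorem: start from a generic three-component Monge metric, which by Section~\ref{sec:geom-interpr-poiss} is the Monge form of a quadratic line complex in $\mathbb{P}^3$, use the projective classification of such complexes into eleven Segre types \cite{Jess} as a bookkeeping device, and in each case impose the nonlinear constraints (\ref{nonlin}). Concretely, I would parametrize a quadratic line complex by a symmetric $6\times 6$ matrix $Q$ acting on the Pl\"ucker coordinates $(p^{12},p^{34},p^{31},p^{24},p^{14},p^{23})$, so that the invariant content is the Jordan form of $Q\Omega^{-1}$ where $\Omega$ is the Gram matrix of the Pl\"ucker quadric. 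For each of the eleven Segre symbols I take Jessop's normal form for $Q$, substitute $p^{ij}=u^idu^j-u^jdu^i$ and pass to the affine chart $u^4=1$, $du^4=0$ (or another chart where that is degenerate), obtaining an explicit $3\times 3$ metric $g_{ij}$ depending on a few structural parameters (the eigenvalues of the Jordan blocks).

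The next step, carried out case by case, is to compute $c_{nkm}=\tfrac{1}{3}g_{n[m,k]}$ and then check the nonlinear system (\ref{nonlin}), equivalently (\ref{eq:7}): $c_{mnk,l}=-g^{pq}c_{pml}c_{qnk}$. Since the $g_{ij}$ are at most quadratic in the $u^i$, this reduces in every case to a finite set of polynomial equations on the eigenvalue parameters. Solving these — which is the computational heart of the argument and is stated to have been done with CDIFF \cite{cdiff} in REDUCE \cite{reduce} — selects, within each Segre stratum, precisely the subvarieties of parameter space (the ``refined'' Segre symbols with coinciding eigenvalues: $[(111)111]$, $[(111)12]$, $[11(112)]$, $[(114)]$, $[(123)]$, $[(222)]$) on which a Hamiltonian operator exists, and kills the other strata entirely. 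Along the way one observes the geometric consequence noted before the theorem: the vanishing of (\ref{nonlin}) forces $\det g_{ij}$, a priori a Kummer quartic, to degenerate into a perfect square — a double quadric, possibly splitting further into double planes — which is a useful consistency check and also organizes the final list into the three displayed types. For each surviving case one then applies the residual freedom of projective transformations (\ref{pr}) together with affine changes of the flat coordinates to bring the metric to the stated canonical form, and reads off $\det g$; the displayed one-parameter family $g^{(1)}$ with $c\neq\pm1$ is exactly the residual modulus of the $[(111)111]$ stratum (the values $c=\pm1$ being excluded because there $\det g\equiv0$ degenerates further, or the metric degenerates, merging the case into a lower stratum).

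Finally I would verify that the six canonical forms are pairwise projectively inequivalent. The cleanest invariant for this is the singular surface $\det g_{ij}=0$ together with its multiplicity structure: a double irreducible quadric for $g^{(1)},g^{(2)}$; two distinct double planes for $g^{(3)},g^{(4)}$; a quadruple plane for $g^{(5)},g^{(6)}$. Within each pair one distinguishes the two metrics by a finer invariant — for instance the Segre type itself (which is a projective invariant of the complex), or flatness/conformal flatness of $g$ (the text records that $g^{(4)},g^{(5)},g^{(6)}$ are flat while $g^{(1)},g^{(2)},g^{(3)}$ are not, and the latter have non-vanishing Cotton tensor), or, for $g^{(1)}$ versus $g^{(2)}$, the value of the modulus $c$ against the degenerate limit. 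The main obstacle, and the part I expect to consume essentially all the effort, is the brute-force traversal of all eleven Jordan normal forms and the solution of the resulting polynomial systems from (\ref{nonlin}); the projective-normalization and the inequivalence bookkeeping are routine by comparison, given the explicit singular surfaces.
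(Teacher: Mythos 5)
Your proposal follows essentially the same route as the paper's proof: traverse Jessop's eleven Segre normal forms of quadratic complexes in $\mathbb{P}^3$, compute the Monge metric in a suitable affine chart, impose the nonlinear constraints (\ref{nonlin}) to cut out the refined Segre strata $[(111)111]$, $[(111)12]$, $[11(112)]$, $[(114)]$, $[(123)]$, $[(222)]$, and normalize projectively. Your added remarks on pairwise inequivalence (via the singular surface, Segre type, and flatness) and on the exclusion of $c=\pm 1$ (where $\det g^{(1)}$ vanishes identically, violating non-degeneracy) are consistent with what the paper states.
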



\begin{proof}[Proof of Theorem \ref{sec:three-component-case-1}]

  The classification is achieved by going through the list of Segre types of
  quadratic complexes and selecting those whose Monge metrics fulfil
  (\ref{nonlin}). In what follows we use the notation of \cite{fmoss}; more
  details on the projective classification of quadratic complexes in
  $\mathbb{P}^3$ can be found in \cite{Jess}.

  \medskip

  \textbf{Segre type $[111111]$.}  In this case the equation of the complex is
  \begin{multline*}
    \lambda_1(p^{12}+p^{34})^2- \lambda_2(p^{12}-p^{34})^2+ \lambda_3
    (p^{13}+p^{42})^2 - \lambda_4(p^{13} - p^{42})^2
    \\
    + \lambda_5(p^{14}+p^{23})^2- \lambda_6(p^{14}-p^{23})^2=0,
  \end{multline*}
  here $\lambda_i$ are the eigenvalues of $Q\Omega^{-1}$. The corresponding
  Monge metric is
  \begin{multline*} [a_1+a_2(u^3)^2+a_3(u^2)^2]
    (du^1)^2+[a_2+a_1(u^3)^2+a_3(u^1)^2](du^2)^2
    \\
    +[a_3+a_1(u^2)^2+a_2(u^1)^2] (du^3)^2+ 2[\alpha u^3-a_3u^1u^2] du^1du^2
    \\
    +2[\beta u^2-a_2u^1u^3] du^1du^3+2[\gamma u^1-a_1u^2u^3] du^2du^3,
  \end{multline*}
  where $a_1=\lambda_5-\lambda_6, \ a_2=\lambda_3-\lambda_4, \
  a_3=\lambda_1-\lambda_2, \ \alpha=\lambda_5+\lambda_6-\lambda_3-\lambda_4, \
  \beta=\lambda_1+\lambda_2-\lambda_5-\lambda_6 , \
  \gamma=\lambda_3+\lambda_4-\lambda_1-\lambda_2$, notice that
  $\alpha+\beta+\gamma=0$.  A direct computation shows that the only metrics
  which satisfy (\ref{nonlin}) are those for which the eigenvalues fulfil the
  relation (up to permutations of the $\lambda_i$):
  \begin{displaymath}
    \lambda_2=\lambda_3=\lambda_4=\frac{\lambda_1+\lambda_5+\lambda_6}{3}.
  \end{displaymath}
  Complexes of this type are denoted $[(111)111]$. Without any loss of
  generality one can set $\lambda_2=\lambda_3=\lambda_4=0, \ \lambda_1=1, \
  \lambda_5=\frac{c-1}{2}, \ \lambda_6=-\frac{c+1}{2}$ where $c$ is a parameter
  (note that one can add a multiple of the Pl\"ucker quadric to the equation of
  the complex). This results in the metric
  \begin{equation*}
    {g}^{(1)}=\begin{pmatrix}
      (u^{2})^{2}+c & -u^{1}u^{2}-u^{3} & 2u^{2}
      \\ -u^{1}u^{2}-u^{3} & (u^{1})^{2}+c(u^{3})^{2} & -cu^{2}u^{3}-u^{1}
      \\ 2u^{2} & -cu^{2}u^{3}-u^{1} & c(u^{2})^{2}+1
    \end{pmatrix}.
  \end{equation*}
  Note that although different values of $c$ correspond to projectively
  non-equivalent complexes, the corresponding singular surface, $\det
  g^{(1)}=0$, which is the double quadric $(u^1u^2 - u^3)^2=0$, does not depend
  on $c$. Non-equivalent complexes with coinciding singular surfaces are called
  cosingular. It was shown in \cite{A} that varieties of cosingular line
  complexes are generically curves, with the only exception provided by
  complexes of Segre type $[(111)111]$, in which case the variety of cosingular
  complexes is two-dimensional.  This explains, in particular, why the
  cosingular complex of Segre type $[(111)(111)]$, known as `special', does not
  occur in our classification.

  Another, more symmetric choice of representative within the same class, can
  be obtained if one assumes
  \begin{displaymath}
    \lambda_2=\lambda_4=\lambda_6=\frac{\lambda_1+\lambda_3+\lambda_5}{3},
  \end{displaymath}
  where without any loss of generality one can set
  $\lambda_2=\lambda_4=\lambda_6=0, \ \lambda_1=p, \ \lambda_3=q, \
  \lambda_5=r$, which results in the metric
  \begin{equation*}
    \begin{pmatrix}
      r+p(u^{2})^{2}+q(u^3)^2 & (r-q)u^3-pu^{1}u^{2} & (p-r)u^2-qu^1u^3 \\
      (r-q)u^3-pu^{1}u^{2} & q+p(u^{1})^{2}+r(u^{3})^{2} & (q-p)u^1-ru^{2}u^{3}
      \\ (p-r)u^2-qu^1u^3 & (q-p)u^1-ru^{2}u^{3} & p+q(u^1)^2+r(u^{2})^{2}
    \end{pmatrix},
  \end{equation*}
  recall that $ p+q+r=0$. In this case
  \[
\det g^{(1)} = pqr ((u^1)^2 + (u^2)^2 + (u^3)^2 + 1)^2,
\]
 the corresponding singular surface is the double  (imaginary)  sphere.

  \medskip

  \textbf{Segre type $[11112]$}.  The equation of the complex is
  \begin{multline*}
    \lambda_1(p^{12}+p^{34})^2- \lambda_2(p^{12}-p^{34})^2+ \lambda_3
    (p^{13}+p^{42})^2- \lambda_4(p^{13} - p^{42})^2+ 4 \lambda_5 p^{14}p^{23}+ (p^{14})^2=0.
  \end{multline*}
  The corresponding Monge metric is
  \begin{multline*} [\lambda (u^2)^2+\mu(u^3)^2+1] (du^1)^2+[\lambda
    (u^1)^2+\mu] (du^2)^2+[\mu(u^1)^2+\lambda] (du^3)^2
    \\
  +  2[\alpha u^3-\lambda u^1u^2] du^1du^2+2[\beta u^2- \mu u^1u^3] du^1du^3+
    2\gamma u^1 du^2du^3,
  \end{multline*}
  where $\lambda=\lambda_1-\lambda_2$, $\mu=\lambda_3-\lambda_4$,
  $\alpha=-\lambda_3-\lambda_4+2\lambda_5$,
  $\beta=\lambda_1+\lambda_2-2\lambda_5$, $\gamma=-\alpha-\beta$.  A direct
  computation shows that there are two subcases which satisfy
  (\ref{nonlin}). Up to permutations of the eigenvalues $\lambda_i$, we have

  \noindent{\bf Subcase $[(111)12]$:}

  \begin{equation*}
    \lambda_2 = \lambda_3 = \lambda_4 = (1/3)(\lambda_1+2\lambda_5).
  \end{equation*}
  Without any loss of generality one can set $\lambda_1=1, \ \lambda_2=
  \lambda_3=\lambda_4=0, \ \lambda_5=-1/2$. This results in the metric
  \begin{equation*}
    g^{(2)} = \begin{pmatrix}
      (u^{2})^{2}+1 & -u^{1}u^{2}-u^{3} & 2u^{2} \\
      -u^{1}u^{2}-u^{3} & (u^{1})^{2} & -u^{1} \\
      2u^{2} & -u^{1} & 1
    \end{pmatrix}.
  \end{equation*}

  \noindent {\bf Subcase $[11(112)]$:}
  \begin{displaymath}
    \lambda_3 = \lambda_4 = \lambda_5 = (1/2)(\lambda_1+\lambda_2).
  \end{displaymath}
  Without any loss of generality one can set $\lambda_1=1/2, \ \lambda_2=-1/2,
  \ \lambda_3=\lambda_4=\lambda_5=0$. This results in the metric
  \begin{equation*}
    g^{(3)} = \begin{pmatrix}
      (u^{2})^{2}+1 & -u^{1}u^{2}&0 \\
      -u^1u^2 & (u^1)^2 & 0 \\
      0 & 0 & 1%
    \end{pmatrix}.
  \end{equation*}

  \medskip

  \textbf{Segre type $[114]$}.  The equation of the complex is
\begin{equation*}
    \lambda_1(p^{12}+p^{34})^2- \lambda_2(p^{12}-p^{34})^2+
    4\lambda_3(p^{14}p^{23}+p^{42}p^{13})+2p^{14}p^{42}+4(p^{13})^2=0.
\end{equation*}
  Setting $p^{ij}=u^idu^j-u^jdu^i$ and using the affine chart $u^1=1, \ du^1=0$
  we obtain the associated Monge metric,
  \begin{multline*}
    \lambda (du^2)^2+[\lambda (u^4)^2+4] (du^3)^2+[\lambda (u^3)^2-2u^2]
    (du^4)^2
    \\
   + 2\alpha u^4 du^2du^3+2[u^4-\alpha u^3] du^2du^4-2\lambda u^3u^4 du^3du^4,
  \end{multline*}
  where $\lambda=\lambda_1-\lambda_2$,
  $\alpha=2\lambda_3-\lambda_1-\lambda_2$. A direct computation shows that the
  only metrics which satisfy (\ref{nonlin}) are those for which
  $\lambda_1=\lambda_2=\lambda_3$. Complexes of this type are denoted
  $[(114)]$. Without any loss of generality one can set
  $\lambda_1=\lambda_2=\lambda_3=0$, which results in the metric
$$
4 (du^3)^2-2u^2(du^4)^2+2u^4du^2du^4=0.
$$
Setting $u^4\to u^1, \ u^3\to u^3/2$ we obtain
\begin{displaymath}
  g^{(4)}=   \begin{pmatrix}
    -2u^2  & u^1  &  0
    \\
    u^1  & 0  &   0
    \\
    0 & 0  & 1
  \end{pmatrix}.
\end{displaymath}

\medskip

\textbf{Segre type $[123]$.} The equation of the complex is
\begin{multline*}
  -\lambda_1(p^{12}-p^{34})^2+ 4\lambda_2 p^{13}p^{42}+ 4(p^{13})^2+  \lambda_3(4p^{14}p^{23}+(p^{12}+p^{34})^2)+2p^{14}(p^{12}+p^{34})=0.
\end{multline*}
Setting $p^{ij}=u^idu^j-u^jdu^i$ and using the affine chart $u^1=1, \ du^1=0$
we obtain the associated Monge metric,
\begin{multline*}
  \lambda (du^2)^2+[\lambda (u^4)^2+4] (du^3)^2+[\lambda (u^3)^2+2u^3]
  (du^4)^2
  \\
  +2\alpha u^4 du^2du^3+2[1-\lambda u^3] du^2du^4+2[\gamma u^2-\lambda
  u^3u^4-u^4] du^3du^4,
\end{multline*}
where $\lambda=\lambda_3-\lambda_1$, $\alpha=2\lambda_2-\lambda_1-\lambda_3, \
\gamma=\lambda-\alpha$.  The only metrics of this form which fulfil
(\ref{nonlin}) are those for which $\lambda_1=\lambda_2=\lambda_3$. Complexes
of this type are denoted $[(123)]$.  Without any loss of generality one can set
$\lambda_1=\lambda_2=\lambda_3=0$, which results in the metric
$$
4 (du^3)^2+2u^3(du^4)^2+2 du^2du^4-2u^4du^3du^4.
$$
Setting $u^4\to i\sqrt 2u^1, \ u^3\to u^2/2, \ u^2\to -i u^3/\sqrt 2$ we obtain
\begin{displaymath}
  g^{(5)}=\begin{pmatrix}
    -2u^2 & u^1 & 1
    \\
    u^1 & 1 & 0
    \\
    1 & 0 & 0
  \end{pmatrix}.
\end{displaymath}

\medskip

\textbf{Segre type [222].} Here we have two (projectively dual) subcases, with
the equations
\begin{equation*}
  2\lambda_1 p^{12}p^{34}+ 2\lambda_2 p^{13}p^{42}+ 2\lambda_3 p^{14}p^{23}+(p^{12})^2+(p^{13})^2+(p^{14})^2=0,
\end{equation*}
and
\begin{equation*}
    2\lambda_1 p^{12}p^{34}+2\lambda_2 p^{13}p^{42}+2\lambda_3 p^{14}p^{23}+(p^{23})^2+(p^{24})^2+(p^{34})^2=0,
\end{equation*}
respectively.  Setting $p^{ij}=u^idu^j-u^jdu^i$ and using the affine chart
$u^1=1, \ du^1=0$ we obtain the associated Monge metrics,
\begin{displaymath}
  (du^2)^2+ (du^3)^2+(du^4)^2+ 2\alpha u^4 du^2du^3+2\beta u^3 du^2du^4+2\gamma
  u^2 du^3du^4,
\end{displaymath}
and
\begin{multline*}
  ((u^3)^2+(u^4)^2)(du^2)^2+
  ((u^2)^2+(u^4)^2)(du^3)^2+((u^2)^2+(u^3)^2)(du^4)^2
  \\
  +2(\alpha u^4-u^2u^3) du^2du^3+2(\beta u^3-u^2u^4) du^2du^4+2(\gamma u^2
  -u^3u^4) du^3du^4,
\end{multline*}
where $\alpha=\lambda_2-\lambda_1, \ \beta =\lambda_1-\lambda_3, \ \gamma
=\lambda_3-\lambda_2$.  In both cases the condition (\ref{nonlin}) implies
$\lambda_1=\lambda_2=\lambda_3$ (such complexes are denoted $[(222)]$),
however, the second metric becomes degenerate. This is the constant case $
g^{(6)}$.

\textbf{Other Segre types,} namely $[1113]$, $[1122]$, $[15]$, $[24]$, $[33]$,
$[6]$, do not correspond to homogeneous third-order Hamiltonian
operators. Thus, the only allowed Segre types are those for which:

\noindent (a) The Jordan normal form of $Q\Omega^{-1}$ contains at least three
Jordan blocks (that is, there are at least three entries in square brackets).

\noindent (b) There are three distinct Jordan blocks with the same eigenvalue
$\lambda$ (that is, there is a round bracket with three entries). According to
\cite{Jess}, p. 200, this implies that the singular surface of the
corresponding quadratic complex is a double quadric (possibly, reducible).

\noindent (c) The average of the remaining three eigenvalues (outside round
brackets) equals $\lambda$ (without any loss of generality one can set
$\lambda=0$).

As demonstrated above, the only Segre types that satisfy all these conditions
are $[(111)111], [(111)12], [11(112)], [(114)], [(123)], [(222)]$.  The types
$[15], [24], [33], [6]$ do not satisfy condition (a); the types $[1113],
[1122]$ satisfying (a)-(c) lead to degenerate Monge metrics.

\end{proof}

\section{Concluding remarks}

The main result of this paper is a complete classification of 3-component
homogeneous third-order Hamiltonian operators of differential-geometric type,
which was obtained based on the link to Monge metrics and quadratic line
complexes. Modulo projective equivalence, we found six types
of such operators. This was done by going through the list of normal forms of
quadratic complexes in $\mathbb{P}^3$, labelled by their Segre types
\cite{Jess}.  We observed that the necessary condition for a quadratic line
complex to be associated with a Hamiltonian operator is the degeneration of its
singular surface, known as Kummer's quartic, into a double quadric (which
itself may split into a pair of planes).

\begin{itemize}

\item The main challenge is to extend our classification to the general
  $n$-component case $n>3$. First of all, one can generate new examples
  of homogeneous Hamiltonian operators with arbitrary number of components by
 taking direct sums of operators with fewer components.  Thus, it is natural to restrict to the classification of irreducible operators. Although the relation to
  Monge metrics and quadratic line complexes is still available, there exist no
  reasonable `normal forms' for quadratic complexes even in $\mathbb{P}^4$. All
  known examples suggest however that singular varieties of quadratic complexes
  in $\mathbb{P}^n$ corresponding to third-order $n$-component Hamiltonian
  operators are not arbitrary, and degenerate into a double hypersurface of
  degree $n-1$. We recall that the singular variety of a generic quadratic line
  complex in $\mathbb{P}^n$ is a hypersurface of degree $2n-2$. For $n=3, 4$
  these varieties are known as Kummer's quartics in $\mathbb{P}^3$
  \cite{Kummer} and Segre sextic hypersurfaces in $\mathbb{P}^4$ \cite{Segre},
  respectively.

\item It would be interesting to construct first order Hamiltonian operators
  compatible with the third order operators found in this paper, and to
  investigate the corresponding integrable hierarchies. For all Examples from
  Section \ref{sec:examples} this was done in \cite{FN, OM98, KN1, KN2}, see
  also \cite{Strachan} for some results in the case of constant third-order
  operators $\eta^{ij}\ddx{3}$.

  \item Some $n$-component third-order
 Hamiltonian operators (\ref{casimir}) possess, in addition to the $n$
local Casimirs $u^i$, another $n$ nonlocal Casimirs of the form
$s^i=\psi^i_j({\bf u}) \ddx{-1}u^j$. Changing from the flat coordinates $u^i$
to the nonlocal variables $s^i$ one obtains a first-order constant coefficient
operator $\eta^{ij}\ddx{}$, thus establishing the Darboux theorem. Although this procedure works for all
Hamiltonian operators from Examples 1-3 of Section \ref{sec:examples} \cite{FN, OM98, KN1, KN2}, it does not seem to be universally applicable.  For general third-order Hamiltonian operators, the  Darboux theorem  is yet to be established.

\end{itemize}

\section*{Acknowledgements}

We thank G. Potemin for his involvement at the early stage of this work, and numerous discussions.
We also thank B. Dubrovin, M Marvan, O. Mokhov and A. Nikitin for useful comments. This research was
largely influenced by our friend and colleague Yavuz Nutku (1943-2010) to whom
we dedicate this paper.  MVP and RFV acknowledge financial support from GNFM of
the Istituto Nazionale di Alta Matematica, the Istituto Nazionale di Fisica
Nucleare, and the Dipartimento di Matematica e Fisica ``E. De Giorgi'' of the
Universit\`a del Salento. RFV would like to thank A.C. Norman for his support
on REDUCE. MVP's work was also partially supported by the RF Government grant
\#2010-220-01-077, ag. \#11.G34.31.0005, by the grant of Presidium of RAS
\textquotedblleft Fundamental Problems of Nonlinear
Dynamics\textquotedblright\  and by the RFBR grant 11-01-00197.

\end{document}